\newcommand{\ket}[1]{| #1 \rangle}
\newcommand{\bra}[1]{\langle #1 |}
\def\u{\mathbf u}
\def\k{\mathbf k}
\def\V{\mathcal V}
\def\e{\mathrm{e}}
\def\i{\mathrm{i}}
\newcommand{\unity}{\ensuremath{\mathbbm 1}}
\theoremstyle{definition}
\newtheorem{lemma}{Lemma}
\newcommand{\bs}{$\bullet$}
\begin{document}
\title{Does causal dynamics imply local interactions?}

\author{Zolt\'an Zimbor\'as}
\affiliation{Wigner Research Centre for Physics, H-1121, Budapest, Hungary}
\affiliation{MTA-BME Lend\"ulet Quantum Information Theory Research Group, Budapest, Hungary}
\affiliation{Mathematical Institute, Budapest University of Technology and Economics,  H-1111, Budapest, Hungary}
\author{Terry Farrelly}
\affiliation{Institut f{\"u}r Theoretische Physik, Leibniz Universit{\"a}t, Appelstra{\ss}e 2, 30167 Hannover, Germany}
\affiliation{ARC Centre for Engineered Quantum Systems, School of Mathematics and Physics, University of Queensland, Brisbane, QLD 4072, Australia}
\author{Szil\'ard Farkas}
\affiliation{Wigner Research Centre for Physics, H-1121, Budapest, Hungary}
\author{Lluis Masanes}
\affiliation{Computer Science Department, University College London, United Kingdom.}
\affiliation{London Centre for Nanotechnology, University College London, United Kingdom.}
\date{\today}

\begin{abstract}
We consider quantum systems with causal dynamics in discrete spacetimes, also known as quantum cellular automata (QCA). 
Due to time-discreteness this type of dynamics is not characterized by a Hamiltonian but by a one-time-step unitary. This can be written as the exponential of a Hamiltonian but in a highly non-unique way. We ask if any of the Hamiltonians generating a QCA unitary is local in some sense, and we obtain two very different answers.
On one hand, we present an example of QCA for which all generating Hamiltonians are fully non-local, in the sense that interactions do not decay with the distance. 
We expect this result to have relevant consequences for the classification of topological phases in Floquet systems, given that this relies on the effective Hamiltonian.
On the other hand, we show that all one-dimensional quasi-free fermionic QCAs have quasi-local generating Hamiltonians, with interactions decaying exponentially in the massive case and algebraically in the critical case.
We also prove that some integrable systems do not have local, quasi-local nor low-weight constants of motion; a result that challenges the standard definition of integrability.
\end{abstract}
\maketitle

\section{Introduction}

Quantum cellular automata (QCAs) originally arose in the context of quantum computation as the generalization of classical cellular automata \cite{Feynman82} and were proven to be universal quantum computers \cite{Watrous95}.
QCAs can also be understood as the many-body generalization or ``second quantization" of quantum walks \cite{19Farrelly}.
From a physics perspective, QCAs are quantum field theories in discrete spacetimes obeying strict causality \cite{SW04, SFW06, 19Farrelly, 19Arrighi, DP14, BDPT15, GUWZ10}.
This means that after one time-step information only propagates a finite distance.  
Hence, QCAs provide a rigorous regularization of (continuous) quantum field theories which simultaneous preserve causality and unitarity, something impossible in Hamiltonian lattice field theory \cite{19Farrelly}.  In Lagrangian lattice field theory, the path integral is equivalent to a QCA for some field theories \cite{FS20}.
On a more speculative level, some arguments suggest that spacetime might be discrete at the Planck scale, and that all of the more familiar continuous spacetime physics emerges as an effective description at larger scales.
This opens the possibility of considering QCAs as Planck-scale theories.

The mathematical formulation of discrete-time quantum dynamics is different from that of continuous time.
In the discrete case, dynamics is represented by a one-time-step unitary evolution operator $W$ and in the continuous case by a Hamiltonian $H$.
The eigenstates of a Hamiltonian $H$ can be ordered with increasing energies, but the eigenstates of a unitary $W$ cannot be ordered because the corresponding quasi-energies are defined modulo $2\pi$.
This also makes unclear what should be the Gibbs states associated to $W$.
An exception to this are the unitaries that are close to the identity $W\approx \unity -\i\epsilon H$, which arise when continuous-time dynamics is Trotterized for simulations \cite{NielsenChuang}.
But despite the above-mentioned differences, it is reasonable to expect that, at large time scales, discrete-time models converge to continuous-time models.
The results presented in this work suggest that this convergence is not straightforward.

In this work we address the following question.
If $W$ is the evolution operator of a QCA, we consider all Hamiltonians $H$ which generate it via $W=\e^{-\i H}$, and ask whether one of these Hamiltonians is in some sense local.
In general, the Hamiltonian $H$ cannot have finite-range interactions, because the exponential of a finite-range $H$ is only approximately causal, as constrained by the Lieb-Robinson bound \cite{LR72}.
But $H$ can be local in the weaker sense of having interactions that decay with the distance.
In this work we present two extreme examples of QCAs in one spatial dimension, with opposite decaying behaviour.

The first model that we analyse (Section \ref{sec:fractal}) is a so called ``fractal QCA" introduced in \cite{GUWZ10}. 
We prove that any of its generating Hamiltonians has interactions which do not decay with the distance, and that the weight (number of qubits acted on) of the interaction terms is unbounded.
The implications of this are intriguing, as the effective Hamiltonian plays a key role in understanding topological phases of matter in Floquet
systems. But here, in contrast to our expectations, we see that the effective Hamiltonian can be extremely non-local. This leads to exciting questions, e.g., how does such non-locality impact our understanding of dynamical phases?

The evolution operator $W$ of the fractal QCA is a Clifford unitary \cite{SW04}, and these share some features with quasi-free bosonic unitaries, like the fact that dynamics can be represented in a symplectic phase space of dimension linear in the number of modes (qubits), which allows for efficiently simulating the dynamics with a classical computer.
But despite sharing these features with integrable systems, we prove in Section \ref{sec:integrab} that all conserved quantities of the fractal QCA (i.e.~operators that commute with $W$) are non-local and have unbounded weight (like the Hamiltonians).
This is a very interesting fact because it challenges one of the standard characterizations of integrable systems in terms of local (or low-weight) conserved quantities \cite{GME11}.
And suggest that, in the discrete-time scenario, integrability should be characterized differently.

The second family of QCAs that we analyze (Section \ref{sec:fermions}) have general quasi-free fermion dynamics in one spatial dimension.
In this case we show that there always exists a Hamiltonian with decaying interactions.
This decay is exponential in the gapped case and inversely proportional to the distance in the critical case.
However, we need to define what do we mean by gapped and critical when (quasi-)energy is defined modulo $2\pi$.
We prove that the whole algebra of operators corresponding to a type of quasi-particle drifts to the right at a constant speed equal to the winding number of the quasi-energy band associated to this quasi-particle.
Hence, when this winding number is not zero, the quasi-particle behaves like massless particles in quantum field theory.
For this reason we say that a quasi-free fermionic QCA is critical when some quasi-energy bands have non-zero winding number.
In contrast, when all winding numbers are zero, we say that the QCA is gapped.

\section{The fractal QCA}
\label{sec:fractal}

\subsection{Description of the model}

%In this section we introduce the fractal QCA \cite{GUWZ10} and prove that it is drastically different Hamiltonians compared to the fermionic QCAs of the next section, in the sense that they are highly non-local.  

Clifford QCAs \cite{schlingemann2008structure, GUWZ10, haah2018nontrivial, haah2019clifford} are QCAs on lattices of qubits with the property that products of Pauli operators are mapped to products of Pauli operators.  In Ref. \cite{GUWZ10} one-dimensional Clifford QCAs were dived in two classes depending on the spacetime graph of the evolution of a single-site Pauli operator: those with a periodic structure ({\it periodic Clifford QCAs}) and those with a  spacetime graph that is self-similar over long timescales  ({\it fractal Clifford QCAs}).
This classification has since then been turned out to have importance in schemes of measurement-based quantum computation built on Clifford QCAs \cite{stephen2019subsystem} and the fractal property of various QCAs (Clifford and non-Clifford) has been studied intensely recently \cite{GNW10, yoshida2013exotic, hillberry2020entangled}.
In what follows we define a particular fractal Clifford QCA that was studied in \cite{GUWZ10}.

Consider a spin chain with $L$ qubits labelled by $r\in\{0,1, \ldots, L-1\}$ and periodic boundary conditions. 
We denote by $\sigma_x^r, \sigma_y^r, \sigma_z^r$ the Pauli sigma matrices acting on qubit $r$.
The evolution operator $W$ is determined by conditions
\begin{equation}\label{eq:exQCA}
 \begin{split}
  W^{\dagger}\sigma_z^rW &= \sigma_x^r\ ,\\
   W^{\dagger}\sigma_x^rW &= \sigma_x^{r-1}\sigma_y^r\sigma_x^{r+1}\ ,
 \end{split}
\end{equation}
for all $r$.
To see this, recall that $\sigma_y^r =\i\, \sigma_x^r \sigma_z^r$ and use $W^{\dagger}\sigma_y^r W = \i\, W^{\dagger}\sigma_x^r W W^{\dagger}\sigma_z^r W$.

To keep track of the evolution of a general $n$-qubit Pauli operator it is more convenient to use the phase-space description. 
Then each $n$-qubit Pauli operator $\sigma_\u$ is represented by a phase-space vector $\u = (q_0, p_0,  \ldots, q_{L-1}, p_{L-1} ) \in \{0,1\}^{2L}$ so that
\begin{align}
  \label{eq:Pauli element}
  \sigma_\u
  =
  \bigotimes_{r=0}^{L-1} 
  (\sigma_x^{r})^{q_r}
  (\sigma_z^{r})^{p_r} 
  \ .
\end{align}
With this notation we have that $\sigma_\u \sigma_{\u'} \propto \sigma_{\u+\u'}$ where addition in phase space is defined modulo 2.
%for any Clifford $W$ there is a symplectic matrix $S$ such that
%\begin{equation}
%  W^\dagger \sigma_\u W
%  \propto \sigma_{S\u}\ ,
%\end{equation}
%where the product
We can write the phase space of the system as 
\begin{equation}
  \V = \bigoplus_{r=0}^{L-1} \V_r\ ,  
\end{equation}
where subspace $\V_r \cong \{0,1\}^{2}$ is associated to qubit $r$.
For each region of the lattice $\mathcal R \subseteq \{0, \ldots, L-1\}$ we define the corresponding subspace of Pauli operators
\begin{equation}
  \V_{\mathcal R} = 
  \bigoplus_{r\in \mathcal R} \V_r\ .
\end{equation}
%We denote by $\V_{\mathbb R}^* = \V_{\mathbb R}\0$ the set of vectors in $\V_{\mathbb R}$ except the zero.

Since Paulis form an operator basis we can express any of the Hamiltonians of the fractal QCA $W=\e^{-\i H}$ as
\begin{equation}\label{eq:H1}
  H = \sum_{\u\in \V} h_\u\, \sigma_\u\ ,
\end{equation}
where $h_\u$ are the real coefficients.
We say that a Hamiltonian is local if the coefficients $h_\u$ decay with the size of the support of $\u$.
That is, there is a monotonically decreasing function $f:\mathbb N \to \mathbb R^+$ such that
\begin{equation}\label{eq:def_loc}
  |h_\u| \leq f(D(\u))\ ,
\end{equation}
where the diameter of $\u$ is
\begin{equation}
  D(\u) = \min\{d(r_2,r_1):\u \in \V_{[r_1, r_2]}\}\ ,
\end{equation}
with
\begin{equation}
  d(r_2,r_1) = \begin{cases}
                r_2-r_1\ \mathrm{if}\ r_2-r_1\geq 0 \\
                r_2 -r_1 + L\ \mathrm{otherwise.}
               \end{cases}
\end{equation}

\subsection{No local Hamiltonians}
\label{sec:integrab}

The following lemma tells us that, for the fractal QCA defined in equation (\ref{eq:exQCA}), \emph{none} of the Hamiltonians satisfying $W=\e^{-\i H}$ is local in the sense of \eqref{eq:def_loc}.

\begin{lemma}\label{lemma 1}
Let $W$ be the QCA defined in equation (\ref{eq:exQCA}), and let $[r_1, r_2] \subseteq \{0, \ldots, L-1\}$ be any interval of the spin chain. For each Pauli operator in the interval $\u \in \V_{[r_1, r_2]}$ there is another Pauli operator in the larger interval $\u' \in \V_{[r_1-1, r_2+1]}$ which is not in any smaller interval and has the same coefficient $h_{\u'} = h_\u$ in \emph{any} of the Hamiltonians $H$ satisfying $W=\e^{-\i H}$, .  
\end{lemma}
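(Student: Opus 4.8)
The plan is to reduce everything to the single fact that $H$ commutes with $W$. Since $W=\e^{-\i H}$, the Hamiltonian is a function of $W$ and so $[H,W]=0$, i.e. $W^\dagger H W = H$ for \emph{every} Hamiltonian generating the QCA. Expanding $H=\sum_\u h_\u\,\sigma_\u$ and using that $W$ is Clifford, conjugation permutes Paulis up to a phase, $W^\dagger\sigma_\u W = c_\u\,\sigma_{T(\u)}$, where $T:\V\to\V$ is the $\mathbb F_2$-linear (symplectic) map induced on phase space and $c_\u$ a phase. Matching coefficients in $W^\dagger H W=H$ forces $h_{T(\u)}=c_\u\,h_\u$; for real coefficients this means $c_\u=\pm1$ on the support of $h$, so the coefficients are constant up to sign along the $T$-orbits. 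Thus the whole lemma becomes a statement about a single orbit of $T$: I must produce, in the orbit of $\u$, an element $\u'$ reaching the enlarged interval.

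Next I would write $T$ explicitly from \eqref{eq:exQCA}. In the translation-invariant polynomial representation, where a Pauli is a pair of Laurent polynomials $(f,g)$ over $\mathbb F_2[x,x^{-1}]$ recording its $\sigma_x$-part and $\sigma_z$-part, the rules $\sigma_z^r\mapsto\sigma_x^r$ and $\sigma_x^r\mapsto\sigma_x^{r-1}\sigma_y^r\sigma_x^{r+1}$ give the symbol
\begin{equation}
  M=\begin{pmatrix} x^{-1}+1+x & 1 \\ 1 & 0 \end{pmatrix},\qquad
  M^{-1}=\begin{pmatrix} 0 & 1 \\ 1 & x^{-1}+1+x \end{pmatrix},
\end{equation}
both with entries of degree at most one. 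Hence $T^{\pm1}$ enlarges any support by at most one site on each end, which immediately yields $T^{\pm1}(\u)\in\V_{[r_1-1,r_2+1]}$ for $\u\in\V_{[r_1,r_2]}$; this is just the finite-range (causality) property of the QCA and supplies the containment required by the lemma.

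The heart of the argument is then to show the support genuinely reaches a new boundary site, so that the chosen orbit element is not confined to $[r_1,r_2]$. Here the off-diagonal entry $x^{-1}+1+x$ does the work, and I would split into cases according to the type of the extreme sites of $\u$: if an endpoint carries $\sigma_x$ or $\sigma_y$ (nonzero $\sigma_x$-part), then $x^{-1}+1+x$ sits in the $(1,1)$ slot of $M$ and pushes the $\sigma_x$-part one step past that endpoint, so $\u'=T(\u)$ reaches it; if both endpoints are pure $\sigma_z$, the same factor now acts through $M^{-1}$ on the $\sigma_z$-part and $\u'=T^{-1}(\u)$ spreads it past both ends. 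In every case the candidate $\u'$ is $T(\u)$ or $T^{-1}(\u)$, lies in $\V_{[r_1-1,r_2+1]}$, and strictly escapes $[r_1,r_2]$, with $|h_{\u'}|=|h_\u|$ by the orbit-constancy above.

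I expect the main obstacle to be making the boundary growth \emph{exact} and sign-clean rather than merely generic. One must rule out accidental $\mathbb F_2$ cancellations of the outermost coefficient at $r_1-1$ or $r_2+1$, treat the mixed-endpoint configuration (pure $\sigma_z$ on one side, $\sigma_x/\sigma_y$ on the other) carefully since then a single application of $T$ or $T^{-1}$ spreads only one side, and track the multiplicative $\pm1$ signs $c_\u$ coming from reordering the image Paulis into standard form in order to upgrade $|h_{\u'}|=|h_\u|$ to the exact equality $h_{\u'}=h_\u$ asserted in the statement. Once the single-step enlargement is established, iterating it shows the orbit of any $\u$ with $h_\u\neq0$ contains Paulis of unbounded diameter all sharing the value $|h_\u|$, so no monotonically decreasing $f$ can satisfy \eqref{eq:def_loc}, which is exactly the non-locality conclusion.
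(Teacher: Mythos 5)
Your opening move --- $[H,W]=0$, hence the Pauli coefficients are constant (up to Clifford signs $c_\u=\pm1$) along orbits of the phase-space map $T$ --- is exactly the paper's starting point, and your symbol matrix $M$ is correct. The genuine gap is in the growth step. Your one-step case analysis only settles the configurations where both extremal sites are of the same kind; in the mixed configuration (one end carrying $\sigma_x/\sigma_y$, the other pure $\sigma_z$), which you flag but do not resolve, $T$ is guaranteed to push the support past one end while the other end can contract (e.g.\ $\sigma_x^{r_2-1}\sigma_z^{r_2}$ has image ending in $\sigma_y^{r_2-1}$ because the two $\sigma_x^{r_2}$ contributions cancel over $\mathbb F_2$), and $T^{-1}$ does the mirror image. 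So a single application of $T^{\pm1}$ only shows the support \emph{escapes} $[r_1,r_2]$ on one side; it does not show the minimal interval becomes $[r_1-1,r_2+1]$, which is what the lemma asserts, nor that the translation-invariant diameter $D(\u)$ grows, which is what the non-locality conclusion \eqref{eq:def_loc} actually needs. An orbit whose support merely translates --- one end advancing while the other retreats in lockstep --- is precisely a \emph{glider}; it has bounded diameter forever and is perfectly compatible with a local Hamiltonian. Iterating your one-step argument does not close this, because after one step you can land back in the mixed configuration, and the appendix of the paper shows that the only candidate glider frontiers are exactly of this mixed type (x on one side, z on the other).

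This is where the paper concentrates its technical effort, and where your proposal has nothing: it proves in Appendix~\ref{app:gliders} that $W$ has no gliders, invokes Lemma II.15 of \cite{GUWZ10} to conclude that no power $W^n$ has a glider either, and then uses a pigeonhole argument (only $4^l$ Pauli strings fit on $l$ sites, so an orbit confined to diameter $l$ must eventually return to a translate of itself, i.e.\ furnish a glider for some $W^n$) to force unbounded spreading of every orbit. Without this ingredient --- or some substitute that rules out bounded, translating orbits --- the conclusion does not follow from the endpoint analysis alone. The paper's own caveat that a single step can even shrink the support ($\sigma_z^r\sigma_y^{r+1}\sigma_y^{r+2}\sigma_z^{r+3}\mapsto\sigma_y^{r+1}\sigma_y^{r+2}$) is a further warning that one-step support bookkeeping is too fragile to carry the lemma.
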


\begin{proof}
Take \emph{any} Hamiltonian $H$, such that $W = e^{-iH}$.  Then, $H$ satisfies $W^{\dagger n}HW^n = H$ for any $n\in\mathbb{Z}$.  We can expand the Hamiltonian in terms of Pauli operators, rewriting equation (\ref{eq:H1}), as
\begin{equation}\label{eq:H}
 H=\sum_{\mathbf{u}\in\mathcal{O}}h_{\mathbf{u}}\sum_{n=0}^{m(\mathbf{u})-1}W^{\dagger n}\sigma_{\mathbf{u}}W^{n},
\end{equation}
where now $\mathcal{O}\subset \mathcal{V}$ is a set of labels of Pauli strings with one vector for \emph{each closed orbit} under applying $W$ up to $m(\mathbf{u})$ times.  As a convention, we can choose the $\mathbf{u}\in\mathcal{O}$ that labels each orbit to be the label corresponding to the Pauli string with the smallest support in the orbit (i.e., the Pauli string with the smallest $D(\mathbf{u})$).  Here $m(\mathbf{u})$ is the length of the orbit, meaning that $m(\mathbf{u})$ is the smallest positive integer such that
\begin{equation}
W^{\dagger m(\mathbf{u})}\sigma_{\mathbf{u}}W^{m(\mathbf{u})}=\sigma_{\mathbf{u}}.
\end{equation}
Note that $m(\mathbf{u})$ must exist for each orbit because we have a finite quantum system, and there are only a finite number of Pauli strings, so the orbits must be closed.

The next step is to show that terms like $W^{\dagger n}\sigma_{\mathbf{v}}W^{n}$ will generally spread over larger and larger regions for fractal QCAs.  Note that the region $\sigma_{\mathbf{v}}$ is supported on is not guaranteed to grow for \emph{only a single application} of $W$.  Take, for example, $\sigma_{\mathbf{v}}= \sigma_z^r\otimes \sigma_y^{r+1} \otimes \sigma_y^{r+2}\otimes \sigma^{r+3}_z$.  Then we have $W^{\dagger}\sigma_{\mathbf{v}}W = \sigma_y^{r+1}\otimes \sigma_y^{r+2}$.

We first note the fact that this QCA $W$ has no gliders, which is proved in appendix \ref{app:gliders}.  A glider is an operator $\sigma_{\mathbf{v}}$ with the property that $W^{\dagger}\sigma_{\mathbf{v}}W=S^{\dagger k}\sigma_{\mathbf{v}}S^k$, for some $k\in\mathbb{Z}$, where $S$ is just the unitary that shifts qubits one step to the left.  Next, we use lemma II.15 in \cite{GUWZ10}.  This shows that, if a Clifford QCA $W$ has no gliders then $W^n$ has no gliders for any $n\in\mathbb{Z}$.

Consider $\sigma_{\mathbf{v}}$ with support on an interval with $l$ sites with $l<L-2$.  If we consider how $\sigma_{\mathbf{v}}$ evolves over time, it follows that $W^{\dagger n}\sigma_{\mathbf{v}}W^n$ must eventually spread over an interval larger than $l$ sites.  To see this, we can argue by contradiction:\ suppose that that $W^{\dagger n}\sigma_{\mathbf{v}}W^n$ always remains localised on at most $l$ sites (the region may shift left or right but the range of sites is at most $l$).  Now there are only a finite number of Pauli strings on $l$ sites ($4^{l}$ many).  So at some value of $n\neq 0$ we must have
\begin{equation}
 W^{\dagger n}\sigma_{\mathbf{v}}W^n = S^{\dagger k}\sigma_{\mathbf{v}}S^k,
\end{equation}
for some $k\in\mathbb{Z}$.  But this means that $W^n$ has a glider, which is impossible because $W$ has no gliders.

% Now because $T_N$ locally acts like $T$ (on regions smaller than $N-2$ sites), this means that a local Pauli string $A$, must % grow for some values of $k$ when we apply $T_N$ $k$ times to get $T_N^k(A)$.

Then we can apply this logic to the orbits in the sum in equation (\ref{eq:H}).  Therefore, we see that $H$ contains interactions between arbitrarily far regions \emph{with no decay of interaction strength} with distance.  More precisely, for any Pauli operator $\sigma_{\u}$ in the interval $\u \in \V_{[r_1, r_2]}$, there is another operator in the larger interval $\u' \in \V_{[r_1-1, r_2+1]}$ (and not in any smaller interval), which has the same coefficient $h_{\u'} = h_\u$ in any Hamiltonians $H$ satisfying $W=e^{-iH}$.
\end{proof}

\subsection{No local constants of motion
% and maximally-entangled eigenstates
}

The proof of the above lemma not only applies to operators $H$ such that $W=\e^{-\i H}$ but to any operator which commutes with $W$. 
This shows that any constant of motion of the fractal QCA is non-local in the same sense that the Hamiltonians.

%The projectors $|\psi\rangle\! \langle\psi|$ onto the eigenstates of $W$ are constants of motion which can be written as
%\begin{align}
%  |\psi\rangle\! \langle\psi|
%  =
%  \sum_{\u\in \V} \psi_\u\, \sigma_\u\ ,
%\end{align}
%for some coefficients $\psi_\u \in \mathbb R$ satisfying $\psi_\textbf 0 = 2^{-L}$ and $\sum_{\u\in \V} \psi_\u^2 = 2^{-L}$.
%The reduced state of $|\psi\rangle\! \langle\psi|$ in a region $[r_1, r_2]$ satisfying $L\gg |r_2-r_1|$ is approximately maximally entangled
%\begin{align}
%  \rho
%  =
%  \sum_{\u \in \V_{[r_1, r_2]}} \psi_\u\, \sigma_\u
%  \approx 2^{-L} \unity\ .
%\end{align}
%This approximate equality is a consequence of Lemma~\ref{lemma 1}, which tells us that most of the weight of the coefficients $\psi_\u$ is for $\u \notin \V_{[r_1, r_2]}$. 
%Having maximally-entangled eigenstates is typical of chaotic Floquet systems. However, as mentioned in the introduction, the fractal QCA has also features of integrable systems.

We finish this section by commenting about the continuous-time dynamics of any Hamiltonian $H$ that generates the fractal QCA $W=\e^{-\i H}$. 
As the lemma tells us, the continuous-time dynamics $\e^{-\i Ht}$ for $t\in \mathbb R$ is fully non-local but it has a particular type of destructive interference that cancels out all non-causal effects every time that $t$ reaches an integer value $t\in \mathbb Z$.

\section{Quasi-free fermionic QCAs}
\label{sec:fermions}

%The first type of system that we will look at are quasi-free fermionic QCAs.  At each lattice site, we have $2n$ Majorana fermion modes, and so there are $2n\times L= N$ such operators in total.

\subsection{Fermionic systems}

In this section we introduce some formalism for working with general quasi-free fermionic systems.
Consider $N$ fermionic modes with associated Majorana operators $a_i$ with $i\in \{1, 2, \ldots, 2N\}$.
These operators are Hermitian $a_i^\dagger =a_i$ and satisfy the canonical anti-commutation relations
\begin{equation}\label{cl}
  \left\{ a_i, a_{j}\right\}
  = 2\, \unity \delta_{ij}\ ,
\end{equation}
where $\unity$ is the identity and $\delta_{ij}$ the Kronecker-delta function. 
Instead of Majorana operators we could use the creation and annihilation ones 
\begin{align}\nonumber
  f_r^\dagger &= \frac 1 {\sqrt 2} \left( a_{2r} -\i\, a_{2r+1} \right) ,
  \\
  f_r &= \frac 1 {\sqrt 2} \left( a_{2r} +\i\, a_{2r+1} \right) ,
\end{align}
for $r=1,\ldots, N$, but Majoranas simplify our expressions with no loss of generality.

A unitary operator $W$ is quasi-free if it maps each Majorana operator onto a linear combinations of them
\begin{equation}\label{eq:qf}
 W^\dagger a_i W =\sum_{j} O_{ij}\, a_{j}\ .
\end{equation}
The Hermiticity of $a_i$ together with the anti-commutation relations \eqref{cl} imply that the matrix $O$ is orthogonal.
Imposing that $W$ commutes with the fermionic parity operator
\begin{equation}\label{eq:parity}
  Q = \bigotimes_r 
  (2 f_r^{\dagger} f_r -1 )\ ,
\end{equation}
implies (Lemma \ref{lem:parity} in appendix \ref{app:b}) that $O$ has unit determinant $O\in {\rm SO}(2N)$.
%We define quasi-free QCAs by the property that the evolution operator maps the Majorana fermion operators to linear combinations of Majorana fermion operators.  We can write this as where $U_{ij}$ is a matrix satisfying $U_{ij}=0$ if $|i-j|>l$ for some fixed $l\in \mathbb{N}$ independent of $L$.  This last property follows from causality.
Note that particle-number conserving models (where $W$ commutes with $\sum_r f_r^{\dagger} f_r$) are a small subset of the quasi-free models.

%In continuous time, physical Hamiltonians must be a sum of even products of fermion operators, which is a consequence of superselection.  In analogy with this, we require that $W$ is a sum of even products of fermion operators.

%The unitarity of the evolution operator $W$ ensures that the anticommutation relations must be preserved.  But this, together with the fact that the evolution preserves hermiticity, implies that $U$ is an orthogonal matrix.  In appendix \ref{app:b}, we prove that $U$ is actually a special orthogonal matrix.

A system of $N$ fermion modes can be mapped to $N$ qubits via the Jordan-Wigner transformation \cite{JW28}
\begin{eqnarray}
  \nonumber
  a_{2r} = (\mbox{$\prod_{s=1}^{r-1}$} \sigma_z^{s})\, \sigma_x^{r}\ ,
\\
  a_{2r+1} = (\mbox{$\prod_{s=1}^{r-1}$} \sigma_z^{s})\, \sigma_y^{r}\ ,
\end{eqnarray}
where $\sigma_{x,y,z}^{r}$ denote the Pauli sigma matrices acting on qubit $r$.
Although this representation is not local, the product of an even number of operators from $\{f_r, f^\dagger_{r}, f_{r+1}, f^\dagger_{r+1} \}$ only acts on qubits $r$ and $r+1$ for systems on an infinite line. 
%(Additionally, these even operators generate an algebra on qubits $r,r+1$ commuting with the global parity operator $\prod_{r=1}^n \sigma_z^{[r]}$.)

For each $O\in {\rm SO}(2N)$ there is a (non-unique) real antisymmetric matrix $Z$ such that
$O=\e^{Z}$.
For any anti-symmetric matrix $A$ we define $\alpha(A)= \frac 1 4 \sum_{ij} A_{ij}a_i a_j$. From the anti-commutation relations \eqref{cl} it follows that $[\alpha(A),a_i]=-\sum_j A_{ij}a_j$.  Then using the identity
\begin{align}
\nonumber
 \e^{A}B\e^{-A} & =\e^{[A,\,\cdot\,]}B\\
 & = B+[A,B]+\frac{1}{2}[A,[A,B]]+\cdots
\end{align}
we arrive at
\begin{equation}
 \e^{-\alpha(Z)}a_i\,\e^{\alpha(Z)}=  \sum_{j}(\e^Z)_{ij}\, a_j\ .
\end{equation}
Therefore, up to a phase, any quasi-free unitary $W$ can be written as
\begin{equation}
 W=\e^{\alpha(Z)}\ .
\end{equation}
The Hamiltonian $H=\i\alpha(Z)$ is a possible generator for $W$.

\subsection{QCAs}

Consider a spin chain with $L$ sites labelled by $r \in \mathbb Z_L = \{0,1, \ldots, L-1 \}$ and periodic boundary conditions.
Each site $r$ contains $n$ fermionic modes represented by $2n$ Majorana operators $a_r^l$ with $l\in  \{1, \ldots, 2n\}$.
Complex linear combination of Majorana operators can be represented by vectors in $\mathbb C^L \otimes \mathbb C^{2n}$, where we separate the spatial $r$ and internal $l$ degrees of freedom.
The orthogonal matrix $O$ associated to the QCA's evolution operator $W$ via \eqref{eq:qf}  acts on the space $\mathbb C^L \otimes \mathbb C^{2n}$.

Let $|r\rangle$ be the orthonormal basis for $\mathbb C^L$ corresponding to the position.
Define the translation (or shift) operator $S$ acting on $\mathbb C^L$ via $S\, |r\rangle = |r+1\bmod L\rangle$.
The properties of translation invariance and causality imply that $O$ can be understood as the dynamics of a discrete-time quantum walk with coin space $\mathbb C^{2n}$.
It is well known (see, e.g., \cite{FS14}) that translation-invariance implies the structure
\begin{equation}\label{eq:decO}
  O = \sum_{q} S^q \otimes A_q\ ,
\end{equation}
where $q\in \mathbb Z_L$ and the operators $A_q$ act on the coin space  $\mathbb C^{2n}$.
In particular, the operator $A_q$ specifies how the information that is translated $q$ sites (to the right) is processed.
Additionally, causality enforces the existence of a neighborhood radius (also known as interaction range) $R$ beyond which information does not flow after only one timestep:
\begin{equation}\label{eq:caus}
  A_q = 0 \mbox{ for all } q \notin [-R,R]\ .
\end{equation}

\subsection{The Hamiltonian}

Next we obtain the spectral decomposition of \eqref{eq:decO}.
The eigenvectors of the translation operator
\begin{equation}\label{eq:Sk}
  S^q| k \rangle=\e^{-iqk}|k\rangle\ ,
\end{equation}
are the quasi-momentum states
\begin{equation}
 |k\rangle= \frac{1}{\sqrt{L}}\sum_{r=1}^{L}\e^{ikr}|r\rangle,
\end{equation}
with $k\in \frac {2\pi} L\, \{0,\ldots, L-1\}$. 
If we consider the ansatz $\ket{k} \otimes \ket{v}$ as an eigenvector of \eqref{eq:decO} then we obtain
\begin{equation}
  O\, |k\rangle \!\otimes\! |v\rangle
  = 
  |k\rangle \!\otimes\! 
  M_k |v\rangle\ ,   
\end{equation}
where we define
\begin{equation}\label{def M}
  M_k =\sum_q A_q\, \e^{-iqk}\ .
\end{equation}
The ansatz $\ket{k} \otimes \ket{v}$ is an eigenvector of $O$ if $\ket v$ is an eigenvector of $M_k$.

The orthogonality of \eqref{eq:decO} implies the unitarity of $M_k$ for all $k$.
Hence, the spectral decomposition
\begin{eqnarray}
  M_k 
  &=& 
  \sum_{s}
  \theta _k ^s\,
  P_k^s
  \ ,
\end{eqnarray}
has complex-phase eigenvectors $\theta_k^s$ and orthogonal spectral projectors $P^s_k$ labelled by $s$.

%\subsection{Relation to Quantum Walks}
%The treatment of quasi-quasi-freefree fermionic QCAs is simplified by relating them to discrete-time quantum walks.  A benefit of this is that the results we obtain on the locality of Hamiltonians for quasi-free QCAs also apply to quantum walks.

%This unitary $U$ can be thought of as a discrete-time quantum walk evolution operator for a particle on a line with $L$ sites with a $2n$ dimensional extra degree of freedom.  The quantum walk Hilbert space is $\mathbb{C}^{L}\otimes\mathbb{C}^{2n}$, where the first factor is the position space and the second is the coin space.

In all what follows on fermionic QCAs we work in the thermodynamic limit $L\to\infty$ to maximize the clarity of the results.
For finite $L$ the results are essentially the same but more cumbersome to express. 
We recall that in the thermodynamic limit locations are labelled by $r\in \mathbb Z$ and the momentum becomes continuous $k\in [0, 2\pi)$.

In this limit, the spectral decomposition of \eqref{eq:decO} can be written as
\begin{equation}
  \label{eq:spect O}
  O =
  \int_0^{2\pi}\! \frac {dk}{2\pi}\ 
  \sum_{s}
  \theta_k^s\
  |k \rangle\!\langle k| \otimes P_k^s
  \ .
\end{equation}
Causality \eqref{eq:caus} implies that the matrix \eqref{def M} is holomorphic in $k$ (as a complex variable $k\in \mathbb C$). 
Therefore we can choose the eigenvalues $\theta^s_k$ and spectral projections $P^s_k$ to be holomorphic in $k$ too (see theorem $1.10$ in chapter $2$ of \cite{Kato} and also \cite{Ahlbrecht12,Agaltsov18}).
Of course, we are primarily interested in the range $k\in[0,2\pi)$ as opposed to $k\in \mathbb C$.  

The matrix $M_k$ is periodic in  $k\in[0,2\pi)$, but this may not be true for an individual eigenvalue band $\theta_k^s$.  However, we can group bands together to form continuous and periodic energy bands in the following way.  
Suppose the  $n$ bands $s_1, s_2, \ldots, s_n$ form a closed curve
\begin{align}
  \label{cond:1}
  \lim_{k\to 2\pi}\theta_k ^{s_1} 
  &= \theta_0 ^{s_2}\ ,
  \\
  \lim_{k\to 2\pi}\theta_k ^{s_2} 
  &= \theta_0 ^{s_3}\ ,
  \\ \nonumber
  & \vdots
  \\ \label{cond:n}
  \lim_{k\to 2\pi}\theta_k ^{s_n} 
  &= \theta_0 ^{s_1}\ .
\end{align}
Then we can define the periodic holomorphic function $\Theta: [0,2\pi n) \to \mathbb C$ as
\begin{align}\label{def:Theta}
  \Theta(\k) = 
  \begin{cases}
    \theta_{\k}^{s_1} &\textrm{for}\ \k\in[0,2\pi)
    \\
    \theta_{\bf k -2\pi}^{s_2} &\textrm{for}\ \k\in[2\pi,4\pi)
    \\
    \ \vdots &\vdots
    \\
    \theta_{\k-2\pi(n -1)}^{s_n} &\textrm{for}\ \k\in[2\pi (n-1),2\pi n)
  \end{cases},
\end{align}
and the periodic holomorphic projector
\begin{align}
  \Pi (\k) = 
  \begin{cases}
    P_{\k}^{s_1} &\textrm{for}\ \k\in[0,2\pi)
    \\
    P_{\k-2\pi}^{s_2} &\textrm{for}\ \k \in[2\pi,4\pi)
    \\
    \ \vdots &\vdots
    \\
    P_{\k-2\pi(n -1)}^{s_n} &\textrm{for}\ \k\in[2\pi (n-1),2\pi n)
  \end{cases}.
\end{align}
Each closed curve is associated to a type of quasi-particle labelled by $\nu$. 
Quasi-particle $\nu$ is characterized by the objects $n^\nu, \Theta^\nu (\k), \Pi^\nu (\k)$, which also carry the label $\nu$.
This allows to write \eqref{eq:spect O} as
\begin{equation}
  O =
  \sum_\nu \int_0^{2\pi n^\nu}\! \frac {d\k}{2\pi}\ 
  \Theta^\nu(\k)\
  |\k \rangle\!\langle \k| 
  \otimes \Pi^\nu (\k)
  \ .
\end{equation}
The periodicity of the momentum eigenstates $\ket {k+2\pi} = \ket k$ allows to label them with the extended momentum $\k \in [0, 2\pi n^\nu)$.
The physical interpretation of the extended momentum is the following.
Conditions \eqref{cond:1}-\eqref{cond:n} for quasi-particle $\nu$ suggest that each lattice site $r\in \mathbb Z$ contains $n^\nu$ internal sites and that the dynamics $W$ enjoys a finer translational symmetry (for quasi-particle $\nu$) involving translations of fractional length $1/n^\nu$ which take into account the extra internal sites.
If the number $n^\nu$ is the same for all $\nu$ then we could fine-grain the lattice so that the new system has $n^\nu =1$ for all $\nu$, and the local number of modes is equal to the different types of quasi-particle (and no more).

Let $w^\nu$ denote the winding number of the periodic function $\Theta^\nu: [0,2\pi n^\nu) \to \mathbb C$.
This integer is the number of net loops around the unit circle in $\mathbb C$ that the function does across the interval $[0,2\pi n^\nu)$.
%Let us start by considering the case $w^\nu =0$.
Because $\Theta^\nu (\k)$ is a holomorphic  function, there is another holomorphic function $E^\nu(\k)$ such that \begin{equation}
  \Theta^\nu (\k) 
  = 
  \exp\!\left(-\i E^\nu(\k)+\i\frac {w^\nu}{n^\nu} \k \right)\ . 
\end{equation}
Furthermore, in the range $\k\in [0,2\pi n^\nu)$ the function $E^\nu(\k)$ is periodic and takes real values. 
Note that these real values are not restricted to $[0, 2\pi)$ due to the continuity imposed in definition \eqref{def:Theta}.

Let us construct a Hamiltonian whose single-particle energy bands are the  functions $E^\nu(\k)$.
Our choice of matrix $Z$ satisfying $O= \e^Z$ is 
\begin{equation*}
  \i Z =
  \sum_\nu \int_0^{2\pi n^\nu}\!\! 
  \frac {d\k}{2\pi} 
  \left( E^\nu(\k) -\frac {w^\nu}{n^\nu} \k\right)
  |\k \rangle\!\langle \k| 
  \otimes \Pi^\nu (\k)
  \ .
\end{equation*}
The reason for writing the quasi-energies as the sum of two terms $(E^\nu(\k)-\frac {w^\nu}{n^\nu} \k)$ will be clear below.

Our choice of Hamiltonian $H$ satisfying $W= \e^{-\i H}$ is $H = \i \alpha(Z)$, which can be written as 
\begin{align}
  \nonumber
  H =
  &\sum_{\nu,r,r',l,l'} \int_0^{2\pi n^\nu}\!\! 
  \frac {d\k}{2\pi} 
  \left( E^\nu(\k) -\frac {w^\nu}{n^\nu} \k\right) \times
  \\ \label{def:H} & \times\,
  \e^{\i (r-r')\k}\,
  \bra l \Pi^\nu (\k) \ket {l'}\,
  a_r^l\, a_{r'}^{l'}
  \ .
\end{align}
In the next section we analyze the  locality of this Hamiltonian.

\subsection{Zero winding implies locality}

In this section we consider the case where all energy bands have zero winding number $w^\nu =0$.

In this case the coupling between lattice sites $r, r' \in \mathbb Z$ specified by Hamiltonian \eqref{def:H} is
\begin{align}
  \nonumber
  & \langle r,l |Z| r',l' \rangle
  \\ \label{eq:decay} =& -\i
  \sum_{\nu} \int_0^{2\pi n^\nu}\!\! 
  \frac {d\k}{2\pi}\, E^\nu(\k)\,
  \e^{\i (r-r')\k}\, 
%  \times \\ \label{def:H} & \times
  \bra l \Pi^\nu (\k) \ket {l'}\,
  \ ,
\end{align}
for any pair $l,l'$.
Recall that the functions $E^\nu(\k)$ and $\bra l \Pi^\nu (\k) \ket {l'}$ are analytic and periodic in the integration range $\k \in [0, 2\pi n^\nu)$.
Therefore, expression \eqref{eq:decay} is the Fourier transform of a periodic analytic function.
This is the premise of 
Lemma~\ref{lfou} from Appendix \ref{app:a}, which tells us that
\begin{equation}\label{e23}
  |\langle r,l |Z| r',l' \rangle| \leq
  C_1 \, \e^{-\beta_1 |r-r'|}\ , 
\end{equation}
for some constants $C_1, \beta_1 >0$.
That is, in the non-critical (gapped $H$) case interactions decay exponentially with the distance.

\subsection{Non-zero winding implies weak locality}

Suppose the band $\nu$ has non-zero winding number $w^\nu \neq 0$.
In the next subsection we see that this can be interpreted as the critical case, because any quasi-particle of type $\nu$ moves at constant speed irrespectively of its initial state.
Mimicking the behavior of massless particles in quantum field theory.

The contribution of quasi-particle $\nu$ to the interaction between sites $r,r' \in \mathbb Z$ in Hamiltonian \eqref{def:H} is
\begin{align}
  \int_0^{2\pi n^\nu}\!\! 
  \frac {d\k}{2\pi}
  \left( E^\nu(\k) -\frac {w^\nu}{n^\nu} \k\right)
  \e^{\i (r-r')\k}\, 
  \bra l \Pi^\nu (\k) \ket {l'}\,
  \ .
\end{align}
The part proportional to $E^\nu(\k)$ gives an exponential decay as in \eqref{e23}.
The part proportional to $w^\nu$ is the Fourier transform of the product of the analytic function $\bra l \Pi^\nu (\k) \ket {l'}$ times the discontinuous (on the  $[0, 2\pi n^\nu)$ torus) function $\k$. 
The Fourier transform of a product of two functions is the convolution of their Fourier transforms.
The Fourier transform of the analytic part  can be upper-bounded as
\begin{align}
  \left| \int_0^{2\pi n^\nu}\!\! 
  \frac {d\k}{2\pi}\,
  \e^{\i (r-r')\k}\, 
  \bra l \Pi^\nu (\k) \ket {l'}
  \right|
  \leq 
  C_2 \, \e^{-\beta_2 |r-r'|}\ .
\end{align}
And the Fourier transform of the discontinuous part is
\begin{equation}
  \int_0^{2\pi n^\nu}\!\! 
  \frac {d\k}{2\pi}\, 
  \frac {\k}{n^\nu}\,
  \e^{\i (r-r')\k} 
  = 
  \begin{cases}
  \frac{\i}{r'-r} 
  &\mbox{if } r\neq r' 
  \\
  \pi n^{\nu} 
  &\mbox{if } r=r'
  \end{cases}.
\end{equation}
Hence, their convolution can be upper-bounded by bounding the absolute value of each term
\begin{align}
  \nonumber
  &\left|
  \int_0^{2\pi n^\nu}\!\! 
  \frac {d\k}{2\pi}
  \frac {w^\nu}{n^\nu} \k
  \e^{\i (r-r')\k}\, 
  \bra l \Pi^\nu (\k) \ket {l'}
  \right|
  \\ \nonumber \leq\ &
  C_3\, \sum_{q\neq 0}\, 
  \frac{1}{|q|}\,
  \e^{-\beta_2 |r-r' -q|}
  +C_4\, \e^{-\beta_2 |r-r'|}
  \\ \leq\ & 
  \frac {C_5} {|r-r'|}\ ,
\end{align}
where $\beta_2, C_3, C_4, C_5$ are some constants.  
Hence, we conclude that when at least one of the functions $\Theta^\nu (\k)$ has non-zero winding number, the Hamiltonian $H$ involves interactions that decay no slower than the inverse of the distance.
This is a much weaker form of locality than the exponential decay \eqref{e23}.
As an example, the massless Dirac QCA \cite{D'Ariano12a,D'Ariano12b} has non-zero winding numbers and its Hamiltonian decays as $1/|r-r'|$ exactly.

For (single-particle) quantum walks with gapped spectra, effective quasi-local Hamiltonians were constructed in \cite{Osborne08}.  In contrast, we get bounds on the locality of quantum walk Hamiltonians with or without a gap.  Note, however, that \cite{Osborne08} had no assumption of translational invariance.

\subsection{Criticality as drift dynamics}

Let us consider a quasi-particle $\nu$ with non-zero winding number $w^\nu \neq 0$.
The corresponding sub-algebra of operators is generated by
\begin{equation}
  \sum_{r,l} \e^{\i \k r}\, 
  \bra l \Pi^\nu (\k) \ket v \,
  a_r^{l}\ ,
\end{equation}
for all $\ket v \in \mathbb C^{2n}$.
If the projector has rank one $\Pi^\nu (\k) = |v^\nu (\k)\rangle\! \langle v^\nu (\k)|$ then we can write the simpler expression
\begin{equation}
  b^\nu (\k)
  =
  \sum_{r,l} \e^{\i \k r}\, 
  \bra l v^\nu (\k) \rangle \,
  a_r^{l}\ ,
\end{equation}
for the generators of the sub-algebra of quasi-particle $\nu$.
By construction, the time evolution of these generators is
\begin{equation}
  W^\dagger  b^\nu (\k) W
  =
  \e^{-\i E^\nu(\k)}\, 
  \e^{\i\frac {w^\nu}{n^\nu} \k}\,
  b^\nu (\k)\ .
\end{equation}
The first phase $\e^{-\i E^\nu(\k)}$ corresponds the dynamics generated by an (exponentially) local Hamiltonian.
The second term $\e^{\i\frac {w^\nu}{n^\nu} \k}$ corresponds to a spatial translation $r \mapsto r- \frac{w^\nu} {n^\nu}$ for all the algebra of operators of quasi-particle $\nu$.
This implies that, irrespective of its initial state, the quasi-particle $\nu$ drifts at a constant speed $\frac {w^\nu} {n^\nu}$.
Mimicking the behavior of massless particles in quantum field theory.

\subsection{Remarks on free-fermion QCAs}

One approach to obtain quasi-free fermion QCAs is to take a quantum walk and apply fermionic second quantization \cite{19Farrelly}, resulting on a QCA that preserves particle number.
The family of quasi-free fermion QCAs that we consider is more general and includes QCAs which do not preserve particle number.

%Note that here, because there are two Majorana operators associated to each fermionic mode, the quantum walk has coin dimension $2n$, which is twice the number of fermionic modes per site, $n$.

On another topic, the sum of all winding numbers is the index of the corresponding quantum walk
\begin{equation}
  \mathcal I = \sum_{\nu} w^\nu\ ,
\end{equation}
defined in~\cite{GNVW12}. 
This can be interpreted as the net amount of information flow along the chain. It is has been proven \cite{CGGSVWW16, asboth2012symmetries} that a quantum walk with non-zero index has gapless spectrum.

\section{QCAs generated by time-dependent Hamiltonians}

In \cite{ranard2020converse} it is proven that when a QCA $W$ has zero index (defined in \cite{GNVW12}) there always exists a time-dependent Hailtonian $H(t)$ with exponentially-decaying interactions which generates $W$ in a finite time $\tau$, that is
\begin{align}
  W= \mathcal T \e^{-\i \int_0^\tau H(t) dt}\ ,
\end{align}
where $\mathcal T$ is the time-ordering operator. 
This holds even if $W$ is an approximate QCA \cite{ranard2020converse}.
This implies that the fractal QCA (Section \ref{sec:fractal}), despite not having a quasi-local time-independent generator $H$, it has a quasi-local time-dependent generator $H(t)$.
However, in this work, we are concerned with time-independent Hamiltonians, because we want to relate QCAs to quantum field theories in high-energy and condensed matter physics.

It is worth mentioning that the definition of ``quasi-local Hamiltonian" in \cite{ranard2020converse} is different than ours. In \cite{ranard2020converse} a quasi-local Hamiltonian has exponentially-decaying interactions. Therefore, the fermionic QCAs with non-zero winding number (i.e. non-zero index) have non-quasi-local Hamiltonians.
In our work, a quasi-local Hamiltonian has interactions that decay with the distance in any way, no matter how slow. Therefore, the fact that the fractal QCA does not have a quasi-local Hamiltonian is a very strong result.

\section{Outlook}

This work gives rise to the following important open questions.
Do QCAs have a continuous-time limit?
How should we describe this limit? One possible approach is to work in the Hamiltonian picture, but our results suggest that this not always possible.
On another topic, how should we define integrable dynamics (as opposed to chaotic dynamics) in QCAs?
In the Hamiltonian picture integrability is defined in terms of the existence of local or low-weight constants of motion. In this work we have presented a QCA which should be considered integrable, because its dynamics can be described in phase space, but it does not enjoy local or low-weight constants of motion. This suggests that the integrability criterion for Hamiltonians is not applicable to QCAs.

\subsection*{Acknowledgment}

The authors would like to thank Tobias J.\ Osborne for useful discussions. 
LM acknowledges financial support by the UK's Engineering and Physical Sciences Research Council (grant number EP/R012393/1). 
TF was supported by the ERC grants QFTCMPS and SIQS, the cluster of excellence EXC201 Quantum Engineering and Space-Time Research, the DFG through SFB 1227 (DQ-mat), and the Australian Research Council  Centres of Excellence for Engineered Quantum Systems (EQUS, CE170100009). ZZ acknowledges support from the J\'anos Bolyai Research Scholarship, the UKNP Bolyai+ Grant, and the NKFIH Grants No. K124152, K124176 KH129601, K120569, and from the Hungarian Quantum Technology National Excellence Program, Project No. 2017-1.2.1-NKP-2017-00001.

\bibliographystyle{plainnat}
\bibliography{QCA}

\appendix

\section{The fractal QCA has no gliders}
\label{app:gliders}
To simplify notations, we shall visualize the action of $W$ on monomials of Pauli-matrices as follows. The matrices $\sigma_x$, $\sigma_y$, and $\sigma_z$ themselves will be denoted by x, y, and z, respectively. Monomials of the Pauli matrices obtained by successive applications of $W$ will be written under each other. For example, the action of $W$ on the sigma matrices can be written schematically as

\medskip
\begin{tabular}{c@{\;}c@{\;}c@{\;}}
  &x& \\
x&y&x
\end{tabular}
\begin{tabular}{c@{\;}c@{\;}c@{\;}}
  &y& \\
x&z&x
\end{tabular}
\begin{tabular}{c@{\;}}
z\phantom{,}\\
x,
\end{tabular}

\medskip
\noindent and the inverse of $W$ is 

\medskip
\begin{tabular}{c@{\;}c@{\;}c@{\;}}
  &y& \\
z&x&z
\end{tabular}
\begin{tabular}{c@{\;}c@{\;}c@{\;}}
  &z& \\
z&y&z
\end{tabular}
\begin{tabular}{c@{\;}}
x\phantom{.}\\
z.
\end{tabular}

\medskip
\noindent Since they are inessential for the argument, signs are ignored by our notations, so a monomial is denoted by the same string as minus the same monomial. 

Our proof for the absence of gliders consists of three steps. First we show that if there were gliders, their length could not change under the time evolution. Then we demonstrate that this property makes it possible to define even simpler gliders, which we will call rigid gliders. Finally, a case by case study rules out the existence of rigid gliders.

\medskip
(1) The neighborhood of the left end of a string evolves in the following way:

\medskip
\begin{tabular}{c@{\;}c@{\;}c@{\;}c@{\;}c@{\;}c@{\;}}
z&\bs&\bs&\bs&\bs&\bs\\
  &z   &\bs&\bs&\bs&\bs\\
  &     &z   &\bs&\bs&\bs\\
  &     &     &y   &\bs&\bs\\
  &     &x   &\bs&\bs&\bs\\
  &x   &\bs&\bs&\bs&\bs\\
x&\bs&\bs&\bs&\bs&\bs\    
\end{tabular}
\begin{tabular}{c@{\;}c@{\;}c@{\;}c@{\;}c@{\;}c@{\;}c@{\;}}
\bs&\bs&\bs&\bs&\bs&\bs\\
  &\bs&\bs&\bs&\bs&\bs\\
  &     &\bs &\bs&\bs&\bs\\
  &     &     &x   &\bs&\bs\\
  &     &x   &\bs&\bs&\bs\\
  &x   &\bs&\bs&\bs&\bs\\
x&\bs&\bs&\bs&\bs&\bs\\
\end{tabular}
\begin{tabular}{c@{\;}c@{\;}c@{\;}c@{\;}c@{\;}c@{\;}c@{\;}c}
z&\bs&\bs&\bs&\bs&\bs&\\
  &z   &\bs&\bs&\bs&\bs&\\
  &     &z   &\bs&\bs&\bs&\\
  &     &     &z   &\bs&\bs&$\Longleftarrow$\\
  &     &\bs&\bs&\bs&\bs&\\
  &\bs&\bs&\bs&\bs&\bs&\\
\bs&\bs&\bs&\bs&\bs&\bs&\\
\end{tabular}
\medskip

\noindent The arrow indicates the time at which we specify the leftmost operator, and the three patterns correspond to the three possible choices. The bullets (\bs) indicates that any of the operators x, y, or z may be assigned to the particular position, but it is also possible that the identity is assigned to it. So where the frontier consists of bullets, we actually do not know exactly where the frontier lies. Similarly, the neighborhood of the right frontier looks like this:

\medskip
\begin{tabular}{c@{\;}c@{\;}c@{\;}c@{\;}c@{\;}c@{\;}}
\bs&\bs&\bs&\bs&\bs&z\\
\bs&\bs&\bs&\bs&z&\phantom{z}\\
\bs&\bs&\bs&z&\phantom{z}&\phantom{z}\\
\bs&\bs&y&\phantom{z}&\phantom{z}&\phantom{z}\\
\bs&\bs&\bs&x&\phantom{z}&\phantom{z}\\
\bs&\bs&\bs&\bs&x&\phantom{z}\\
\bs&\bs&\bs&\bs&\bs&x\\
\end{tabular}
\begin{tabular}{c@{\;}c@{\;}c@{\;}c@{\;}c@{\;}c@{\;}}
\bs&\bs&\bs&\bs&\bs&\bs\\
\bs&\bs&\bs&\bs&\bs&\phantom{z}\\
\bs&\bs&\bs&\bs&\phantom{z}&\phantom{z}\\
\bs&\bs&x&\phantom{z}&\phantom{z}&\phantom{z}\\
\bs&\bs&\bs&x&\phantom{z}&\phantom{z}\\
\bs&\bs&\bs&\bs&x&\phantom{z}\\
\bs&\bs&\bs&\bs&\bs&x\\
\end{tabular}
\begin{tabular}{c@{\;}c@{\;}c@{\;}c@{\;}c@{\;}c@{\;}c}
\bs&\bs&\bs&\bs&\bs&z&\\
\bs&\bs&\bs&\bs&z&\phantom{z}&\\
\bs&\bs&\bs&z&\phantom{z}&\phantom{z}&\\
\bs&\bs&z&\phantom{z}&\phantom{z}&\phantom{z}&$\Longleftarrow$\\
\bs&\bs&\bs&\bs&\phantom{z}&\phantom{z}&\\
\bs&\bs&\bs&\bs&\bs&\phantom{z}&\\
\bs&\bs&\bs&\bs&\bs&\bs&\\
\end{tabular}

\medskip
We would like to match the left and right frontiers so that they enclose the evolution of a glider. For example, take a glider whose left frontier contains a z. The past of this site is represented by a z-sequence extending indefinitely in the left upward direction. On the right hand side, this has to be matched with a parallel x-sequence, which in turn implies that the z-sequence on the left hand side has to extend {\it ad infinitum} in the future as well. This way we obtain a right-moving glider. Assuming that the left frontier contains an x, we get a left-moving glider. Clearly, no glider can have a y in the frontier. So the only possible gliders evolve as

\medskip
\hspace{2ex}x $\mathrm{S}_1$ z\hspace{4ex}z $\mathrm{S}_1$ x

\hspace{1ex}x $\mathrm{S}_2$ z\hspace{6ex}z $\mathrm{S}_2$ x

x $\mathrm{S}_3$ z\hspace{8ex}z $\mathrm{S}_3$ x,

\medskip
\noindent where $S_1, S_2, \dots$ are potentially different strings of equal length. 

\medskip
(2) Suppose that there is a glider for which $\mathrm{S}_1\neq\mathrm{S}_2$. Then there is also a shorter glider. Consider for example a left-moving glider. Evolve it by one time step, translate it by one lattice site to the right, and multiply the result by the original operator. If $*$ denotes the product of the corresponding operators, so that x$*$x, y$*$y, and z$*$z are the empty strings, then the new glider evolves in the following way:

\medskip
\hspace{2ex}(x$*$x) $(\mathrm{S}_1\!*\!\mathrm{S}_2)$ (z$*$z)\phantom{=}\hspace{4ex}$(\mathrm{S}_1\!*\!\mathrm{S}_2)$

\hspace{1ex}(x$*$x) $(\mathrm{S}_2\!*\!\mathrm{S}_3)$ (z$*$z)\hspace{2ex}=\hspace{2ex}$(\mathrm{S}_2\!*\!\mathrm{S}_3)$

(x$*$x) $(\mathrm{S}_3\!*\!\mathrm{S}_4)$ (z$*$z)\phantom{=}\hspace{4ex}$(\mathrm{S}_3\!*\!\mathrm{S}_4)$.

\medskip
\noindent So the new glider is shorter at least by two than the original one. Since the evolution of this glider must follow the previously derived pattern, the left and right frontiers are  again x and z, respectively. We can redefine $\mathrm{S}_i$ as the strings representing the inner part of the new glider. If these strings change in time, we can repeat the procedure, thereby obtaining an even shorter glider. After some iterations, we eventually get a rigid glider, for which $\mathrm{S}_1 = \mathrm{S}_2$, so it is simply translated by the time evolution.

\medskip
(3) To rule out rigid gliders, it is enough to look at the operator to the right of the frontier and see what we get after one time step:

\medskip
\begin{tabular}{c@{\;}c@{\;}c@{\;}c}
\phantom{x}&x&x&\bs\\
x&z&\bs&\bs
\end{tabular}
\begin{tabular}{c@{\;}c@{\;}c@{\;}c}
\phantom{x}&x&y&\bs\\
x&z&\bs&\bs
\end{tabular}
\begin{tabular}{c@{\;}c@{\;}c@{\;}c}
\phantom{x}&x&z&\bs\\
x&y&\bs&\bs
\end{tabular}
\begin{tabular}{c@{\;}c@{\;}c@{\;}c}
\phantom{x}&x&\phantom{x}&\bs\\
x&y&\bs&\bs
\end{tabular}

\medskip
\noindent None of these are rigid. An argument analogous to (2) and (3) shows that there are no right-moving gliders either. (This already follows from the non-existence of left-moving gliders because the time evolution is invariant under reflection.)

\section{Fourier lemmas}\label{app:a}

The following two lemmas are proved in References~\cite{KP02,Offner} and~\cite{BN95} respectively.

\begin{lemma}\label{lfou} 
Let $f: \mathcal{T \to T}$ be an analytic function on the torus with Fourier decomposition
\begin{equation}
  f(k) = 
  \sum_{r\in \mathbb Z} \hat f(r)\, \e^{\i r k}\ .
\end{equation}
Then there exist two positive constants $C, \beta >0$ such that 
\begin{equation}
  |\hat f(r)| < C\, e^{-\beta |r|}\ .
\end{equation}
\end{lemma}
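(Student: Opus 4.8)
The plan is to treat this as the classical Paley--Wiener phenomenon: exponential decay of the Fourier coefficients is dual to holomorphic extendability into a complex strip. So I would first upgrade the stated real-analyticity into a concrete complex-analytic statement, and then deform the contour defining $\hat f(r)$ into the strip to expose an exponentially small prefactor.

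First I would record that a $2\pi$-periodic function which is real-analytic on the circle admits a holomorphic, $2\pi$-periodic extension to a horizontal strip $\{z\in\mathbb C : |\mathrm{Im}\,z|<\delta\}$ of some width $\delta>0$ (this is the content cited from \cite{KP02,Offner}). On each closed horizontal line $\mathrm{Im}\,z=\mp\beta$, with any fixed $0<\beta<\delta$, the extension is continuous and $2\pi$-periodic, hence bounded; call $M_\beta=\sup\{|f(x\mp\i\beta)| : x\in[0,2\pi]\}<\infty$.

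Next I would write $\hat f(r)=\frac1{2\pi}\int_0^{2\pi}f(k)\,\e^{-\i r k}\,dk$ and shift this contour vertically. For $r>0$, I apply Cauchy's theorem on the rectangle with horizontal sides $[0,2\pi]$ and $[0,2\pi]-\i\beta$: the integrand $f(k)\,\e^{-\i r k}$ is holomorphic inside the strip, and its two vertical edges at $\mathrm{Re}\,k=0$ and $\mathrm{Re}\,k=2\pi$ cancel because both $f(k)$ and $\e^{-\i r k}$ (using $r\in\mathbb Z$) are $2\pi$-periodic. This gives $\hat f(r)=\frac1{2\pi}\int_0^{2\pi}f(x-\i\beta)\,\e^{-\i r x}\,\e^{-\beta r}\,dx$, whence $|\hat f(r)|\le M_\beta\,\e^{-\beta r}$. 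For $r<0$ I shift upward to $\mathrm{Im}\,k=+\beta$ and obtain $|\hat f(r)|\le M_\beta\,\e^{-\beta|r|}$, while $r=0$ is bounded by $M_0=\sup|f|$. Taking $C$ to be the maximum of these finite suprema (enlarged marginally to make the inequality strict) and keeping $\beta$ yields $|\hat f(r)|<C\,\e^{-\beta|r|}$.

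I expect the only genuinely delicate step to be the \emph{first} one: converting real-analyticity on the circle into a holomorphic periodic extension with an explicit strip of positive width $\delta$, since the decay rate $\beta$ inherited by the final bound is precisely governed by $\delta$. Once the extension is in hand, the remainder is a routine contour deformation whose validity hinges entirely on the periodicity-induced cancellation of the vertical edges; I would lean on \cite{KP02,Offner} for the extension statement and carry out the shift explicitly.
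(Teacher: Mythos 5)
Your proof is correct: the holomorphic extension of a real-analytic periodic function to a strip, followed by the contour shift to $\mathrm{Im}\,k=\mp\beta$ with the vertical edges cancelling by periodicity, is the standard Paley--Wiener argument for exponential decay of Fourier coefficients. The paper itself gives no proof of this lemma --- it simply defers to \cite{KP02,Offner} --- and your argument is precisely the classical one found in those references, so there is nothing to add or correct.
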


\begin{lemma}\label{FTWN}
If $f: \mathcal{T \to T}$ is an analytic function with winding number $w$ then
\begin{equation}
  \sum_{r \in \mathbb Z} r\, |\hat f(r)|^2 =
  w\ .
\end{equation}
\end{lemma}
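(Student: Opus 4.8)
The plan is to express the winding number as a contour integral around the circle and evaluate it by inserting the Fourier series of $f$. The structural fact that makes everything work is that $f$ maps the torus \emph{into} the torus, so its values lie on the unit circle: $|f(k)|=1$ for every $k$, and hence $f(k)^{-1}=\overline{f(k)}$. This unimodularity is what converts the awkward quotient $f'/f$ into a simple bilinear expression in $f$ and $\bar f$.

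First I would recall the standard integral formula for the winding number (degree) of a unimodular map of the circle,
\begin{equation}
 w = \frac{1}{2\pi\i}\int_0^{2\pi} \frac{f'(k)}{f(k)}\,dk\ ,
\end{equation}
and use $f^{-1}=\bar f$ to rewrite the integrand as $f'(k)\,\overline{f(k)}$. Then I would substitute the expansions $f(k)=\sum_r \hat f(r)\,\e^{\i r k}$, $f'(k)=\sum_r \i r\,\hat f(r)\,\e^{\i r k}$, and $\overline{f(k)}=\sum_s \overline{\hat f(s)}\,\e^{-\i s k}$ and integrate term by term. The orthogonality relation $\int_0^{2\pi}\e^{\i(r-s)k}\,dk = 2\pi\,\delta_{rs}$ annihilates every cross term, leaving
\begin{equation}
 \int_0^{2\pi} f'(k)\,\overline{f(k)}\,dk = 2\pi\i\sum_{r\in\mathbb Z} r\,|\hat f(r)|^2\ .
\end{equation}
Dividing by $2\pi\i$ then gives $w=\sum_{r} r\,|\hat f(r)|^2$ exactly, which is the claim.

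The only point requiring care — and therefore the main (minor) obstacle — is justifying the term-by-term integration, i.e.\ interchanging the double sum with the integral. This is harmless here because analyticity of $f$ forces the coefficients $\hat f(r)$ to decay exponentially (Lemma~\ref{lfou}), so the product series for $f'\bar f$ converges absolutely and uniformly on $[0,2\pi]$ and Fubini applies. One should also check that the sign convention for the winding number in the statement is consistent with the orientation in the contour-integral formula; with the expansion $f(k)=\sum_r\hat f(r)\,\e^{\i rk}$ a positive winding corresponds to the positive orientation, so no spurious sign appears. Beyond these bookkeeping checks there is no genuine difficulty, and the computation is essentially a one-line application of Parseval's identity to $f'\bar f$.
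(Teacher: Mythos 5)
Your argument is correct: the winding-number integral $w=\frac{1}{2\pi\i}\int_0^{2\pi}f'(k)/f(k)\,dk$, combined with unimodularity ($1/f=\bar f$) and term-by-term integration justified by the exponential decay of the $\hat f(r)$ from Lemma~\ref{lfou}, gives exactly the claimed identity, and your sign check (e.g.\ $f(k)=\e^{\i wk}$) confirms the conventions match. The paper does not supply its own proof of this lemma --- it only cites a reference --- so there is nothing to compare against, but your derivation is the standard self-contained argument and fills that gap cleanly.
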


\section{Determinant of $O$}\label{app:b}

\begin{lemma}\label{lem:parity}
A quasi-free fermionic unitary commutes with the parity operator \eqref{eq:parity} if and only if the corresponding orthogonal matrix \eqref{eq:qf} has unity determinant.
\end{lemma}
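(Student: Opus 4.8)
The plan is to reduce the biconditional to the single operator identity $W^\dagger Q W = (\det O)\,Q$. Since $O$ is orthogonal we have $\det O = \pm 1$, so this identity immediately yields the claim: $W$ commutes with $Q$ (equivalently $W^\dagger Q W = Q$) if and only if $\det O = +1$, i.e.\ $O\in{\rm SO}(2N)$. The whole proof therefore concentrates on establishing that conjugation by a quasi-free $W$ simply rescales the parity operator by $\det O$.

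First I would rewrite the parity operator \eqref{eq:parity} purely in terms of Majoranas. Writing each single-mode factor $2f_r^\dagger f_r-1$ in terms of the two Majoranas attached to mode $r$, one finds that each factor is, up to a fixed phase, the product of those two Majoranas; multiplying over all $r$ gives $Q = c\,\Omega$ with $\Omega := a_1 a_2\cdots a_{2N}$ the product of all $2N$ Majorana operators and $c$ a nonzero constant (concretely $c=\mathrm{i}^N$, chosen so that $Q$ is Hermitian with $Q^2=\unity$, but only $c\neq0$ matters below). Then I conjugate using \eqref{eq:qf}: since conjugation is an algebra homomorphism, $W^\dagger \Omega W = \prod_{i=1}^{2N}\bigl(W^\dagger a_i W\bigr) = \prod_{i=1}^{2N} b_i$, where $b_i := \sum_j O_{ij}\,a_j$. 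Orthogonality of $O$ gives $\{b_i,b_j\}=2\sum_k O_{ik}O_{jk}=2\delta_{ij}\,\unity$, so the $b_i$ are again Majorana operators.

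The core step is to prove $\prod_{i=1}^{2N} b_i = (\det O)\,\Omega$. I would use two ingredients. First, a general fact about Majoranas: a product of all $2N$ distinct Majorana operators anticommutes with each generator; applied to the $b_i$ this shows $P := \prod_i b_i$ anticommutes with every $b_k$, and since $O$ is invertible each $a_k=\sum_i O_{ik}b_i$ is a linear combination of the $b_i$, so $P$ anticommutes with every $a_k$. Second, $\Omega$ likewise anticommutes with every $a_k$ and is invertible ($\Omega^2=\pm\unity$). Because the complexified Clifford algebra on an \emph{even} number $2N$ of generators has trivial center (a monomial $a_S$ commutes with every $a_k$ only for $S=\emptyset$, so the center is $\mathbb{C}\,\unity$), the element $\Omega^{-1}P$, which commutes with every $a_k$, must be a scalar $\lambda$, whence $P=\lambda\,\Omega$. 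Finally I read off $\lambda$ from the top-degree monomial: expanding $\prod_i b_i=\sum_{j_1,\dots,j_{2N}}\bigl(\prod_i O_{ij_i}\bigr)a_{j_1}\cdots a_{j_{2N}}$, the coefficient of $a_1\cdots a_{2N}$ is the fully antisymmetric sum $\sum_{\pi\in S_{2N}}\mathrm{sgn}(\pi)\prod_i O_{i\pi(i)}=\det O$, so $\lambda=\det O$. Combining with $Q=c\,\Omega$ gives $W^\dagger Q W=(\det O)\,Q$, completing the argument.

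The main obstacle is precisely showing that all the lower-degree terms in the expansion of $\prod_i b_i$ cancel, leaving only $\det O$ times the top form $\Omega$; this is the one place where orthogonality is essential (for $2N=2$ the would-be degree-zero term is the off-diagonal entry $(OO^T)_{12}=0$). Rather than a brute-force bookkeeping of repeated-index monomials, I expect the cleanest route to be the centralizer argument above, relying on the triviality of the center of the even Clifford algebra together with the invertibility of $\Omega$.
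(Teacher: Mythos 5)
Your proposal is correct, and its overall skeleton matches the paper's: both reduce the lemma to the single identity $W^\dagger Q W = (\det O)\,Q$ after rewriting $Q$ (up to a constant) as the product $\Omega = a_1 a_2 \cdots a_{2N}$ of all Majoranas and expanding the conjugated product. Where you diverge is in the key cancellation step. The paper simply invokes the identity $a_{i_1}\cdots a_{i_{2N}} = \varepsilon_{i_1\ldots i_{2N}}\, a_1\cdots a_{2N}$ and reads off $\det O$; taken literally this identity is false for repeated indices (e.g.\ $a_1 a_1 a_2 a_3 = a_2 a_3 \neq 0$ while $\varepsilon_{1123}=0$), so the paper is implicitly relying on the orthogonality of $O$ to kill the lower-degree contributions without saying so. You identified exactly this obstacle and handled it with a centralizer argument: the conjugated product anticommutes with every generator, hence lies in $\mathbb{C}\,\Omega$ by triviality of the center of the even Clifford algebra, and the scalar is then extracted from the top-degree coefficient, which is manifestly $\det O$. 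Your route is slightly longer but is the more rigorous of the two, and it makes transparent both where orthogonality enters and why both directions of the biconditional follow at once from $\det O = \pm 1$.
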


\begin{proof}
First, note that the parity operator \eqref{eq:parity} can be written as
\begin{equation}
  Q = a_1 a_2 \cdots a_{2N}\ .
\end{equation}
Second, note that
\begin{equation*}
  W^{\dagger} a_1 \cdots a_{2N} W
  = \sum_{i_1,\ldots, i_{2N}}
  O_{1,i_1} \cdots O_{2N,i_{2N}}\,
  a_{i_1} \cdots a_{i_{2N}}.
\end{equation*}
Using the anti-commutation relations we can write
\begin{equation}
  a_{i_1} \cdots a_{i_{2N}}
  = \varepsilon_{i_1 \ldots i_{2N}}
  a_1 \cdots a_{2N}\ ,
\end{equation}
where $\varepsilon_{i_1 \ldots i_{2N}}$ is the Levi-Civita symbol.
Combining all of the above we obtain
\begin{equation}
  1= 
  \sum_{i_1,\ldots, i_{2N}}
  O_{1,i_1} \cdots O_{2N,i_{2N}}\,
  \varepsilon_{i_1 \ldots i_{2N}}
  = \textrm{det}(O)\ .
\end{equation}
\end{proof}

\end{document}